\newcommand\SNR{\mathrm{SNR}}
\newcommand\E{\mathbb{E}}
\newtheorem{theorem}{Theorem}
\newtheorem{proposition}{Proposition}
\newtheorem{lemma}{Lemma}
\newtheorem{corollary}{Corollary}
\begin{document}

\title{A proposal for informative default priors scaled by the standard error of estimates}
\author{Erik van Zwet\footnote{Department of Biomedical Data Sciences, Leiden University Medical Center.} \qquad Andrew Gelman\footnote{Department of Statistics and Department of Political Science, Columbia University. This work was partially supported by U.S. Office of Naval Research.}}
\date{\today}

\maketitle

\begin{abstract}
If we have an unbiased estimate of some parameter of interest, then its absolute value is positively biased for the absolute value of the parameter. This bias is large when the signal-to-noise ratio (SNR) is small, and it becomes even larger when we condition on statistical significance; the winner's curse. This is a frequentist motivation for regularization. To determine a suitable amount of shrinkage, we propose to estimate the distribution of the SNR from a large collection or corpus of  similar studies and use this as a prior distribution. The wider the scope of the corpus, the less informative the prior, but a wider scope does not necessarily result in a more diffuse prior. We show that the estimation of the prior simplifies if we require that posterior inference is equivariant under linear transformations of the data. We demonstrate our approach with corpora of 86 replication studies from psychology and 178 phase 3 clinical trials.  Our suggestion is not intended to be a replacement for a prior based on full information about a particular problem; rather, it represents a familywise choice that should yield better long-term properties than the current default uniform prior, which has led to systematic overestimates of effect sizes and a replication crisis when these inflated estimates have not shown up in later studies.
\end{abstract}

\medskip \noindent
{\em Keywords:} shrinkage, exaggeration ratio, type M error, winner's curse

\section{Introduction}
Regression modeling plays a central role in the biomedical and social sciences. Linear and generalized linear models, generalized estimating equations, and quantile regression offer great flexibility and are easy to use. When the sample size is not too small,  statistical inference can be based on the fact that M-estimates of regression coefficients are approximately normal and unbiased \cite{stefanski2002calculus}. This yields the familiar frequentist inference in terms of normality-based confidence intervals and $p$-values, and it also leads to informative Bayesian approaches in which the unbiased estimates form a likelihood which can be augmented with hierarchical models and other forms of prior distribution.

If we have an unbiased estimate of some parameter of interest, such as a regression coefficient, then by Jensen's inequality its absolute value is positively biased for the absolute value of the parameter. This bias is large when the signal-to-noise ratio (SNR) is small, and it becomes even larger when we condition on statistical significance. This is called the winner's curse or type M error \cite{gelman2000type,ioannidis2005contradicted,gelman2014beyond}. We conclude that noisy estimates must be regularized or partially pooled toward zero. However, the degree of this shrinkage should be carefully considered. Too little shrinkage means that we will systematically overestimate effect sizes, which then later do not replicate. On the other hand, too much shrinkage could lead to missing real discoveries. 

From the Bayesian perspective, the answer to ``how much shrinkage'' depends on the prior.  Here, we propose to obtain the relevant prior information from a large collection or corpus of  similar studies. Such a prior can then be used for default or routine Bayesian inference. The wider the scope of the corpus, the less informative the prior and the more generally applicable. Moreover, a wide scope allows us to include many studies in the corpus so that we can estimate the prior information accurately. Perhaps the most important point we want to make is that a wide scope does not necessarily result in a more diffuse prior. 

In the next section, we will motivate the present paper by discussing in more detail why noisy estimates must be regularized. Then, we argue that we can determine the suitable amount of shrinkage by estimating the distribution of the signal-to-noise ratio (SNR) in a particular area of research. We show that a particular independence assumption will make the estimation easier, and ensures that the posterior inference is unaffected by trivial changes of measurement unit. We also show that depending on the shape of the distribution of the SNR, the amount of shrinkage will be adaptive to the SNR.  We demonstrate our approach with two examples. We end the paper with a discussion.

\section{Exaggeration and the need to shrink}

We will ignore small sample issues by assuming that we have a normally distributed, unbiased estimate $b$ of a regression coefficient $\beta$ with known standard error $s$. In other words, conditionally on $\beta$ and $s$, $b$ has the normal distribution with mean $\beta$ and standard deviation $s$ and therefore
\begin{equation}\label{conf}
\mbox{Pr}(b \in [\beta \pm 1.96 \, s] \mid \beta, s)=0.95.
\end{equation}
This setup may appear to be overly simplistic, but inference about regression parameters based on Wald type confidence intervals (and associated $p$-values) is common practice and, as noted above, this is also the building block for the standard Bayesian approach:  even if you don't care about unbiased estimation per se, independent normally distributed unbiased estimates can be used to construct a likelihood function. Exact intervals based on the $t$ distribution are available in linear models, but the difference is already very small in most real-world examples.

Since $b$ is unbiased for $\beta$, it follows from Jensen's inequality that $|b|$ is positively biased for $|\beta|$. This bias is large when $b$ is ``noisy,'' i.e.\ when the signal-to-noise ratio (SNR) $|\beta|/s$ is small. The bias becomes even larger when we condition on statistical significance, which is called the winner's curse. The relation between overestimation of the effect size and the SNR has been demonstrated through simulation \cite{gelman2014beyond,ioannidis2008most}. More recently, the following theorem has been established \cite{vanZwetCator}.

\begin{theorem}
Suppose $b$ is normally distributed with mean $\beta$ and standard deviation $s$. For every $c>0$, the exaggeration ratio,
$$\E\left(|b/\beta|  \mid s, \beta, |b|>c\right) $$
depends on $\beta$ and $s$ only through the SNR $|\beta|/s$. The exaggeration ratio is always greater than 1. It is decreasing in the SNR and increasing in $c$.
\end{theorem}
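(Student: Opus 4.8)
The plan is to standardize everything onto the scale of the $z$-statistic and thereby reduce the problem to a one-dimensional analysis in the SNR. First I would set $z=b/s$, so that $z\sim N(\mu,1)$ with $\mu=\beta/s$, and observe that $|b/\beta|=|z|/|\mu|$ while conditioning on statistical significance reads $|z|>c$ on the standardized scale. Because the map $(z,\mu)\mapsto(-z,-\mu)$ leaves the problem invariant, I may assume $\mu>0$, so that $\mu=|\beta|/s$ is the SNR. The exaggeration ratio then takes the form
\[
R(\mu,c)=\frac{1}{\mu}\,\E\bigl(|z|\mid |z|>c\bigr)=\frac{N(\mu)}{\mu\,D(\mu)},\qquad N(\mu)=\E\bigl(|z|\,\mathbf 1\{|z|>c\}\bigr),\quad D(\mu)=\Prob(|z|>c),
\]
which already proves the first claim: $R$ depends on $\beta$ and $s$ only through the SNR $\mu$.

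For the lower bound $R>1$ and for monotonicity in $c$, I would first record the elementary fact that for any nonnegative random variable $X$ with density $f$ the map $c\mapsto\E(X\mid X>c)$ is nondecreasing, since
\[
\frac{d}{dc}\,\E(X\mid X>c)=\frac{f(c)\,\E\bigl[(X-c)\mathbf 1\{X>c\}\bigr]}{\Prob(X>c)^2}\ge 0 .
\]
Applied with $X=|z|$ this gives at once that $R$ is increasing in $c$ (as $\mu$ is held fixed) and that $\E(|z|\mid|z|>c)\ge\E|z|$; combined with the strict Jensen inequality $\E|z|>|\E z|=\mu$, valid because $z$ places mass on both sides of $0$, this yields $R(\mu,c)\ge\E|z|/\mu>1$.

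The substantive step, and the one I expect to be the main obstacle, is monotonicity in the SNR, i.e.\ $R'(\mu)<0$. Here I would differentiate $R=N/(\mu D)$ and evaluate $N,D$ and their $\mu$-derivatives in closed form using the Gaussian score identity $\partial_\mu\phi(t-\mu)=(t-\mu)\phi(t-\mu)$ together with integration by parts on $[c,\infty)$. Writing $a=\phi(\mu-c)$, $b=\phi(\mu+c)$, $P_+=\Prob(z>c)$ and $P_-=\Prob(z<-c)$, one finds $D=P_++P_-$, $D'=a-b$, $N=a+b+\mu(P_+-P_-)$ and $N'=c(a-b)+(P_+-P_-)$. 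The numerator of $R'(\mu)$ is $\mu N'D-N(D+\mu D')$, and after substituting these expressions the term $\mu(P_+-P_-)D$ cancels, so the whole numerator collapses to $-[(a+b)D+\mu(a-b)(N-cD)]$; thus $R'(\mu)<0$ reduces to
\[
(a+b)\,D+\mu\,(a-b)\,(N-cD)>0 .
\]
Every factor here is positive: $a,b>0$, and $a-b=D'>0$ because $\mu,c>0$; the decisive observation is that $N-cD=\E[(|z|-c)\mathbf 1\{|z|>c\}]=\E[(|z|-c)^+]>0$. This establishes $R'(\mu)<0$ and completes the argument. The only delicate point is the bookkeeping in the differentiation; once the closed forms are in hand, recognizing $N-cD$ as the expectation of a nonnegative quantity makes the sign transparent.
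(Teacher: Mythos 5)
The paper does not actually prove this theorem: it is quoted from van Zwet and Cator, ``The winner's curse and the need to shrink'' (arXiv:2009.09440), and the only proof in the appendix is for Theorem~2. So there is no in-paper argument to compare against; your proposal supplies a proof where the paper has none, and I have checked it line by line: the reduction to $z\sim N(\mu,1)$, the formulas $D'=a-b$, $N=a+b+\mu(P_+-P_-)$, $N'=c(a-b)+(P_+-P_-)$, and the collapse of the numerator of $R'(\mu)$ to $-\left[(a+b)D+\mu(a-b)(N-cD)\right]$ are all correct, and the identification of $N-cD$ as $\E\left[(|z|-c)\mathbf{1}\{|z|>c\}\right]>0$ does clinch the sign. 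The Jensen-plus-monotonicity argument for $R>1$ and the truncation derivative for monotonicity in $c$ are also sound. One point you should make explicit: as literally written the theorem conditions on $|b|>c$, which on the $z$-scale is $|z|>c/s$, not $|z|>c$; under that literal reading the ``depends only on the SNR'' claim would be false, since the ratio would also depend on $s$ through $c/s$. You have (reasonably, and consistently with the surrounding discussion of statistical significance and with the cited source) read the conditioning as being on the standardized statistic $|b|/s>c$; state that interpretation rather than asserting that $|b|>c$ ``reads $|z|>c$ on the standardized scale,'' which as written is a small but real error of translation. With that caveat recorded, the proof is complete and correct.
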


\noindent
The exaggeration ratio is also known as the type M error \cite{gelman2014beyond}.

A partial solution to this overestimation of effect size is to use ``weakly informative'' priors \cite{gelman2008weakly, greenland2015penalization}, but then the question arises:  how informative should the priors be?  The literature on weakly informative priors tends to focus on superior performance compared to noninformative priors. Here we propose to obtain realistic but general prior information from large collections or ``corpora'' of similar studies. Such priors can be used for default or routine Bayesian inference. The priors we propose can be narrow and result in a considerable degree of shrinkage.

\section{Constructing a default informative prior}

\subsection{Using a corpus of previous studies}

Researchers in the life sciences often believe that they have little or no prior information because their study is unique; nobody has ever studied that particular intervention or exposure in that particular population with that particular outcome under those particular circumstances. We believe that it is a mistake to think like that. At the highest level of aggregation, just knowing that you are doing another study in the domain of the life sciences represents a lot of information. 

It is often possible to be more specific, but that does involve making choices that depend on the details of the study in question. The more we zoom in, the smaller the set of relevant examples becomes. This will make it harder to determine the prior distribution accurately. So, we propose to estimate prior distributions from large, broad collections of studies. Obtaining prior information using external data is sometimes referred to as empirical Bayes \cite{carlin2010bayes, efron2012large}.  

For our purposes here, we define a  {\em corpus} as a collection of pairs $(b_j,s_j)$ from studies $j$ that are similar in the sense that they meet certain inclusion criteria. An example would be placebo-controlled phase 3 randomized clinical trials (RCTs). If we know only that a particular study meets a set of inclusion criteria---or we are willing to ignore all other features of that study---then we can model that study exchangeably with all others that meet those criteria. The inclusion criteria of the corpus represent exactly the information that we are including in the prior. This implies that making the scope wider by removing certain criteria means putting less information into the prior. 

There is no reason to expect that fewer inclusion criteria would yield a more widely dispersed prior distribution. Indeed, the prior can become less dispersed if by widening the scope, we add  mostly studies with small effects. From this point of view, one cannot tell by looking at the prior distribution how much substantive information it represents. This is a different point of view from the usual conviction that high-variance priors carry little information \cite{kass1996selection}.

Obtaining prior information from corpora with a wide scope has two important practical advantages. First, the wide scope ensures that many studies are eligible so that the prior can be estimated accurately. Second, the resulting prior can be used as a default in a wide range of applications.

\subsection{Estimating the prior}
Consider three random variables:  $\beta$, $b$, and $s$, with joint density $p$ in the defined population. We assume that $b$ has the normal distribution with mean $\beta$ and standard deviation $s$. We observe $b$ and $s$, and we want to do Bayesian inference about  $\beta$.  Conditionally on $s$, Bayes' rule states
\begin{equation}
p(\beta \mid b,s) \propto p(b \mid \beta,s)p(\beta \mid s),
\end{equation}
where $p(b \mid \beta,s)$ is the conditional density of $b$ given $\beta$ and  $s$, and $p(\beta \mid s)$ is the conditional density of $\beta$ given $s$. 

To be able to do Bayesian inference about $\beta$, our goal is to estimate the probability kernel $p(\beta \mid s)$ on the basis of a sample of observed pairs  $(b_j,s_j)$. Since we are assuming that $p(b \mid \beta,s)$ is $\mbox{normal}(\beta,s)$, it is actually possible to estimate the full joint distribution of $(\beta,b,s)$ from just the pairs $(b_j,s_j)$.

Estimating the conditional distribution of $\beta$ given $s$ is equivalent to estimating the conditional distribution of the signal-to-noise ratio $\beta/s$ given $s$. The $z$ value $b/s$ is the sum of $\beta/s$ and an independent standard normal random variable. So, we can first do some regression modeling to estimate the conditional distribution of $b/s$ conditional on $s$ and then do a deconvolution to obtain conditional distribution of  $\beta/s$ given $s$. Finally, we obtain the prior distribution of $\beta$ given $s$ by scaling.

Suppose we succeed in accurately estimating $p(\beta \mid s)$ from a large corpus of exchangeable pairs $(b_j,s_j)$. If we use that estimate as a prior, then the posterior will be approximately calibrated with respect to that corpus. That is, posterior probabilities will represent frequencies across the corpus.

\subsection{Independence assumption}

Our goal is to develop priors that are widely applicable for routine Bayesian inference. In that context, it is desirable that inference about $\beta$ does not depend on trivial data transformations, such as switching events and non-events in logistic regression, relabeling dummies, changing the unit of measurement of covariates or a numerical outcome.  Mathematically, such a requirement means that, 
\begin{equation}\label{requirement}
p(\beta \mid b, s) =|c|\, p(c \beta \mid cb, |c|s), \mbox{ for all } c\neq 0.
\end{equation}
Requirement (\ref{requirement}) means that posterior inference about $\beta$ is equivariant under linear transformations of the data. It has the following interpretation in terms of the SNR $\beta/s$.

\begin{theorem}
Requirement (\ref{requirement}) holds if and only if
\begin{enumerate}
\item $s$ and $\beta/s$ are independent, and
\item the distribution of $\beta/s$ is symmetric around zero.
\end{enumerate}
\end{theorem}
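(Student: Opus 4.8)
The plan is to rewrite the functional equation (\ref{requirement}) entirely in terms of the conditional density of the signal-to-noise ratio $r=\beta/s$ and show that it collapses to the two stated conditions. First I would write the posterior through Bayes' rule,
\[
p(\beta\mid b,s)=\frac{p(b\mid\beta,s)\,p(\beta\mid s)}{p(b\mid s)},\qquad
p(b\mid\beta,s)=\frac{1}{s}\,\phi\!\left(\frac{b-\beta}{s}\right),
\]
with $\phi$ the standard normal density. The key elementary observation is that the Gaussian likelihood is already equivariant under the map $(\beta,b,s)\mapsto(c\beta,cb,|c|s)$: because $\phi$ is even, $\phi\big((cb-c\beta)/(|c|s)\big)=\phi\big((b-\beta)/s\big)$, so $p(cb\mid c\beta,|c|s)=|c|^{-1}\,p(b\mid\beta,s)$. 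Substituting this into the right-hand side of (\ref{requirement}) cancels the likelihood, and the requirement reduces to a statement about the prior and marginal alone,
\[
\frac{p(\beta\mid s)}{p(b\mid s)}=\frac{p(c\beta\mid|c|s)}{p(cb\mid|c|s)}\quad\text{for all }\beta,b,s,c.
\]

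Next I would exploit that the left-hand side does not involve $b$ and the right-hand side does not involve $\beta$. Separation of variables then forces $p(\beta\mid s)=g(s,c)\,p(c\beta\mid|c|s)$ for some $g$ free of $\beta$. Rewriting in terms of the SNR through $p(\beta\mid s)=s^{-1}p_r(\beta/s\mid s)$ turns this into $p_r(r\mid s)=h(s,c)\,p_r(\mathrm{sign}(c)\,r\mid|c|s)$. Since both sides are probability densities in the free variable $r$ and integrate to $1$, the proportionality constant is forced to be $h(s,c)=1$. Taking $c>0$ gives $p_r(\cdot\mid s)=p_r(\cdot\mid cs)$ for every $c>0$, so the conditional law of $\beta/s$ does not depend on $s$, which is independence (condition~1). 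Taking $c<0$ and invoking condition~1 gives $p_r(r)=p_r(-r)$, which is symmetry about zero (condition~2).

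For the converse I would assume conditions 1 and 2 and verify (\ref{requirement}) by direct substitution. Writing the marginal of $z=b/s$ as the convolution $q(z)=\int\phi(z-r)\,p_r(r)\,dr$, one checks that $q$ inherits evenness from the symmetry of $p_r$ together with the evenness of $\phi$. Feeding the equivariance of $\phi$, the $s$-invariance of $p_r$, and the evenness of both $p_r$ and $q$ into the Bayes expression for $p(c\beta\mid cb,|c|s)$ yields exactly $|c|^{-1}\,p(\beta\mid b,s)$, which is (\ref{requirement}).

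I expect the main obstacle to lie in the forward reduction rather than the converse. The delicate points are arguing rigorously that the proportionality constant must equal $1$ via normalization, and keeping careful track of the sign of $c$: it is precisely the $c<0$ case, combined with the already-established independence, that isolates the symmetry condition from the independence condition. The measure-theoretic caveat that densities are defined only up to null sets can be dispatched by working with continuous versions, but it is worth flagging.
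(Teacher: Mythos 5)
Your proposal is correct and follows essentially the same route as the paper: cancel the equivariant Gaussian likelihood, reduce (\ref{requirement}) to the scaling condition $p(\beta\mid s)=|c|\,p(c\beta\mid |c|s)$ on the prior, and read off independence of $s$ and $\beta/s$ from the $c>0$ case and symmetry from $c<0$ (the paper packages the $c>0$ step as a standalone lemma proved by taking $c=1/y$, whereas you obtain the constant via normalization---a cosmetic difference). One small slip worth fixing: in your displayed ratio identity both sides involve both $\beta$ and $b$, so the separation-of-variables claim as literally stated is wrong; it becomes correct only after cross-multiplying so that all $\beta$-dependence sits on one side and all $b$-dependence on the other, which is evidently what you intended since the conclusion $p(\beta\mid s)=g(s,c)\,p(c\beta\mid|c|s)$ is exactly what that rearrangement yields.
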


Requirement (\ref{requirement}) drastically simplifies the estimation of the probability kernel  $p(\beta \mid s)$, because we need only estimate the {\em marginal}  density of $\beta/s$. Moreover, we may assume this density is symmetric. We only have to estimate the symmetric marginal density of $b/s$ and then do a deconvolution to obtain the marginal density of $\beta/s$. We can then get the distribution of $\beta$ given $s$ by simple scaling.

We motivated (\ref{requirement}) from a pragmatic point of view by insisting that posterior inference about $\beta$ should be equivariant under linear transformations to avoid cheating.  However, (\ref{requirement}) can also be interpreted as an assumption about the joint distribution of the observables $b$ and $s$. Since $b/s$ is the sum of $\beta/s$ and an independent standard normal random variable, a trivial consequence of Theorem 2 is
\begin{corollary}
Requirement (\ref{requirement}) hold if and only if
\begin{itemize}
\item $s$ and $b/s$ are independent, and
\item the distribution of $b/s$ is symmetric.
\end{itemize}
\end{corollary}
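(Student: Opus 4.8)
The plan is to reduce the Corollary to Theorem 2 by transferring the two defining properties across the Gaussian convolution that links $\beta/s$ to $b/s$. Write $W=\beta/s$, $Z=b/s$, and $\epsilon=(b-\beta)/s$. The modeling assumption that $b\mid\beta,s$ is $\mathrm{normal}(\beta,s)$ says exactly that the conditional law of $\epsilon$ given $(\beta,s)$ is standard normal and free of $(\beta,s)$; hence $\epsilon$ is independent of the pair $(s,W)$, is itself $\mathrm{normal}(0,1)$, and $Z=W+\epsilon$. By Theorem 2, Requirement (\ref{requirement}) is equivalent to ``$s$ and $W$ independent and $W$ symmetric about zero,'' so it suffices to show that this pair of conditions holds if and only if ``$s$ and $Z$ independent and $Z$ symmetric.''

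First I would record the single computation that drives everything. Conditioning on $s$ and using that $\epsilon$ is $\mathrm{normal}(0,1)$ and independent of $(s,W)$, the conditional characteristic functions factor as
\begin{equation}\label{factor}
\phi_{Z\mid s}(t)=\phi_{W\mid s}(t)\,e^{-t^2/2}.
\end{equation}
The decisive feature of (\ref{factor}) is that the Gaussian factor $e^{-t^2/2}$ is real, strictly positive, and in particular never vanishes, so it can be divided out to give $\phi_{W\mid s}(t)=\phi_{Z\mid s}(t)\,e^{t^2/2}$.

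For the independence claim I would argue straight from (\ref{factor}). Because $e^{-t^2/2}$ does not involve $s$, the conditional characteristic function $\phi_{Z\mid s}$ is free of $s$ if and only if $\phi_{W\mid s}$ is, which is exactly $s\perp Z \iff s\perp W$. For symmetry I would take expectations over $s$ in (\ref{factor}) to obtain the marginal identity $\phi_Z(t)=\phi_W(t)\,e^{-t^2/2}$, valid since $\epsilon\perp W$. Recalling that a real random variable is symmetric about zero precisely when its characteristic function is real-valued, and using again that $e^{-t^2/2}$ is real and nonzero, $\phi_Z$ is real if and only if $\phi_W$ is, i.e.\ $Z$ is symmetric if and only if $W$ is.

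The only place needing care---and the reason the statement is not purely formal---is the two reverse implications, which amount to deconvolving a standard normal. Their content is precisely the nonvanishing of the Gaussian characteristic function, which makes convolution with $\mathrm{normal}(0,1)$ injective on distributions and lets me invert (\ref{factor}); the forward implications are the routine facts that adding an independent symmetric variable preserves symmetry and that adding an $s$-independent variable preserves conditional independence from $s$.
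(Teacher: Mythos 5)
Your proof is correct and follows the same route the paper intends: the paper offers no written proof of the Corollary, simply asserting it is a ``trivial consequence'' of Theorem 2 because $b/s$ is $\beta/s$ plus an independent standard normal, and your argument is exactly that reduction carried out in detail via conditional characteristic functions. Your write-up also correctly isolates the one non-formal ingredient the paper glosses over---that the reverse implications require deconvolving the Gaussian, which works because $e^{-t^2/2}$ never vanishes---so nothing is missing.
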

We can check if these properties hold, at least to reasonable approximation, in any particular corpus. An important necessary condition for the above to hold, is that $s$ and $|b|$ are positively correlated.  We argue from an anthropic principle \cite{gelman2018anthropic} that it is reasonable to expect such a correlation, as follows. Studies are commonly designed to have just enough power so that effects can just about be estimated from data. Indeed, the goal of sample size calculations (formal or informal) is to balance $|b|$ and $s$ so that the probability that $|b|/s$ exceeds 1.96 is not too large or too small. Hence effects tend to be of the same order of magnitude as standard errors. This does not preclude that the distribution of the SNR can differ between corpora. For example, some research areas might have larger effects, better measurement devices or more funding opportunities for large studies.

If (\ref{requirement}) holds then the prior information about the SNR is all we need for estimating $\beta$. Moreover, we have the following equality for the posterior mean
\begin{equation}
\E(\beta \mid b,s) = s\, \E(\SNR \mid z).
\end{equation}
If (\ref{requirement}) does not hold then this equality doesn't hold either, but we claim that it is still to sensible use the shrinkage estimate $s\,\E(\SNR \mid z)$. By conditioning on $z$, we are using the prior information about the SNR. So, as far as shrinkage is concerned, we are using all the relevant information.

\subsection{Adaptivity and consistency with a $t$ prior}
Suppose we have a normally distributed, unbiased estimator $b_n$ of $\beta$ with known standard error $s_n=\sigma/\sqrt{n}$, where $n$ is the sample size. If we consider $b_n$ to be the full data, we can combine it with a prior for $\beta$ to perform Bayesian inference. If we choose a fixed prior for $\beta$, then its influence disappears as the sample size increases in the sense that the posterior distribution of $\beta$ converges to the likelihood of $b_n$. In particular, the posterior mean of $\beta$ converges to $b_n$ and hence is a consistent estimate of $\beta$. This is a special case of the well-known Bernstein--von Mises theorem. Here, we are proposing to use a fixed prior for the signal-to-noise ratio $\beta/s_n$. Thus, the implied (scaled) prior for $\beta$ depends on the sample size, and therefore Bernstein--von Mises does not apply.

For example, if we put a normal prior with mean zero and standard deviation $\tau$ on $\beta/s_n$, then the posterior mean for $\beta$ is $\frac{\tau^2}{\tau^2 +1} b_n$. Evidently, this is an inconsistent estimate of $\beta$, unless $\beta$ happens to be zero.

Fortunately, by choosing a prior with flatter tails than the normal, it is possible to have a fixed prior on $\beta/s_n$ and still have the posterior distribution of $\beta$ converging to the likelihood of $b_n$. The following theorem is a special case of a result due to Dawid in the context of Bayesian outlier detection \cite{dawid1973posterior}.

\begin{theorem}
Suppose we have a normally distributed, unbiased estimate $b_n$ of $\beta$  with known standard error $s_n=\sigma/\sqrt{n}$, where $n$ is the sample size, and suppose $\beta$ is assigned a sample-size-dependent $t_{\nu}(0,s_n)$ prior distribution. Then, as long as the true $\beta$ is not equal to zero, the limiting posterior distribution of $(\beta - b_n)/s_n$ is standard normal. 
\end{theorem}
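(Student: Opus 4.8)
The plan is to pass to the scale of $z$-values, where the sample size enters only through the location of the data, and then to exploit the heavy-tail robustness phenomenon that underlies Dawid's theorem. Write $z = b_n/s_n$ for the observed $z$-value and $\theta = \beta/s_n$ for the SNR, so that the target quantity is $u = (\beta - b_n)/s_n = \theta - z$. Because the prior on $\beta$ is $t_\nu(0,s_n)$, the induced prior on $\theta$ is the \emph{standard} $t_\nu$ density $g(\theta)\propto (1+\theta^2/\nu)^{-(\nu+1)/2}$, which does not depend on $n$; and the likelihood $b_n\mid\beta\sim\mathrm{normal}(\beta,s_n^2)$ becomes $z\mid\theta\sim\mathrm{normal}(\theta,1)$. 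Hence the posterior density of $u$ is
\begin{equation}
q_z(u) \;\propto\; \phi(u)\, g(u+z),
\end{equation}
where $\phi$ is the standard normal density. Since the true $\beta\neq 0$ is fixed, the true SNR $\beta\sqrt{n}/\sigma$ diverges while the fluctuation in $z$ is of order one, so under the sampling model $z\to\pm\infty$; by symmetry I take $z\to+\infty$. It therefore suffices to show that $q_z$ converges to $\phi$ as a density in $u$ along every sequence $z\to\infty$.

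First I would establish the pointwise behaviour of the prior ratio: for each fixed $u$,
\begin{equation}
\frac{g(u+z)}{g(z)} \;=\; \left(\frac{\nu+z^2}{\nu+(u+z)^2}\right)^{(\nu+1)/2} \;\longrightarrow\; 1 \qquad (z\to\infty),
\end{equation}
because the leading $z^2$ terms cancel and the remainder is of order $1/z$. This is the precise sense in which a heavy-tailed prior becomes locally flat far out in its tail: near the likelihood mode $\theta\approx z$ the prior score $\frac{d}{d\theta}\log g(\theta) = -(\nu+1)\theta/(\nu+\theta^2)$ is of order $1/z$ and exerts a vanishing pull, so the posterior for $\theta$ should resemble the $\mathrm{normal}(z,1)$ likelihood and $u$ should be asymptotically standard normal. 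It follows that $\phi(u)\,g(u+z)/g(z)\to\phi(u)$ pointwise.

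Upgrading this to convergence of the normalized densities requires controlling the normalizing constant by dominated convergence, and this is the step I expect to be the main obstacle, since the pointwise limit ignores a dangerous region. When $u\approx -z$ the argument $u+z$ returns near the prior mode, so $g(u+z)/g(z)$ can be as large as order $z^{\nu+1}$; one must check that the Gaussian factor $\phi(u)$, which is of order $e^{-z^2/2}$ there, crushes this polynomial growth. Concretely I would prove a single $z$-independent bound $g(u+z)/g(z)\le C\,(1+|u|)^{\nu+1}$, valid for all $u$ and all $z\ge 0$; this is elementary from the displayed ratio, treating $u\ge 0$ (where the ratio is at most $1$) and $u<0$ (where the inequality $z^2\le 2(u+z)^2+2u^2$ bounds the ratio by a multiple of $1+u^2$) separately. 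Then $\phi(u)\,g(u+z)/g(z)\le C\,\phi(u)(1+|u|)^{\nu+1}$, which is integrable, so dominated convergence gives $\int \phi(u)\,g(u+z)/g(z)\,du\to\int\phi(u)\,du=1$. Pointwise convergence of the densities $q_z$ to the density $\phi$ then yields convergence in total variation by Scheff\'e's lemma, hence weak convergence of the posterior law of $u=(\beta-b_n)/s_n$ to the standard normal. Since this holds along every diverging sequence and $z\to\infty$ under the true model, the limiting posterior distribution of $(\beta-b_n)/s_n$ is standard normal, as claimed.
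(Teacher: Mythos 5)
Your proof is correct and complete. It is worth noting that the paper itself does not prove this theorem: it cites it as a special case of Dawid (1973) and offers only the heuristic that the $t$ prior, being nearly flat in its far tail where the likelihood concentrates, exerts a vanishing pull. What you have written is a self-contained formalization of exactly that heuristic. The reduction to the $z$-scale (standard $t_\nu$ prior on $\theta=\beta/s_n$, unit-variance normal likelihood, target $u=\theta-z$) is the right change of variables and matches the way the paper parameterizes everything by the SNR; the pointwise limit $g(u+z)/g(z)\to 1$ is the precise statement that the prior score is $O(1/z)$ near the likelihood mode. The one step that genuinely needs care is the domination bound, and you identify and handle it correctly: the inequality $z^2\le 2(u+z)^2+2u^2$ gives $\bigl(\nu+z^2\bigr)/\bigl(\nu+(u+z)^2\bigr)\le 2+2u^2/\nu$, hence $g(u+z)/g(z)\le C(1+|u|)^{\nu+1}$ uniformly in $z\ge 0$, which is exactly what is needed to kill the polynomially large contribution near $u\approx -z$ where the shifted prior returns to its mode. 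Dominated convergence then controls the normalizing constant, and Scheff\'e upgrades pointwise convergence of the posterior densities to total-variation convergence, which is stronger than the weak convergence the statement asks for. Your final passage from ``every diverging sequence $z\to\infty$'' to the random $z$ under the true model is the standard argument and is fine since $z=\beta\sqrt{n}/\sigma+O_P(1)$ diverges whenever $\beta\neq 0$. In short: where the paper outsources the proof to a reference, you have supplied a correct elementary one, and it also makes visible how the rate of posterior flattening depends on $\nu$ through the $(1+|u|)^{\nu+1}$ envelope.
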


The point is that the $t$ distribution has a much flatter tail than the normal distribution. As $n$ becomes large, the likelihood of $b_n$ will concentrate around the true and nonzero $\beta$.  Meanwhile, the prior, by construction, becomes increasingly narrow and is centered around 0. Thus the overlap with the normal likelihood will be in a region where the prior is almost completely flat and hence the posterior will converge to the likelihood. In other words, as $n$ grows and the $z$-value $b_n/s_n$ becomes large, the shrinkage disappears.  In that sense, the amount of shrinkage adapts to the signal-to-noise ratio.

Dawid's theorem is actually more general and provides sufficient conditions for the tail behavior of the prior.  There is an extensive literature about heavy-tailed priors which is reviewed by O'Hagan and Pericchi \cite{o2012bayesian}.

\subsection{Mixture of normals prior}

Above we established that using a $t$ prior distribution for $\beta/s$ yields a consistent estimator. In practice, we prefer to use a finite mixture of zero-mean normal distributions. This has two advantages. First, all calculations can be done explicitly, which is fast and can give us insight. The mathematical details are not difficult, and we describe them in the appendix. More importantly, a mixture of zero-mean normal distributions is a very flexible model. Already with just two components, we can separately fit the central part and the tails of the distribution of $\beta/s$. As it turns out, a mixture of two components provides a reasonably good fit in our examples, so this is what we used there.

The tails of such a mixture are Gaussian and hence we will not get consistent estimates. However, this is not a major concern. Sample sizes never actually go to infinity, and if one of the components has a large standard deviation, then the tails of the mixture will be flat enough for practical purposes.
As with many statistical models (for example, logistic vs.\ probit regression), what is most important is not the exact functional form but rather that the model has enough flexibility that we can learn from data.

\section{Example using corpora in psychology and medicine}\label{meta-analyses}
We will illustrate the ideas of this paper with two example corpora, one from psychology and one from medicine.

To obtain reliable prior information, the reported effects in our corpus must be a fair sample of the population of effects within the scope. It is well-known that for various reasons (publication bias, file drawer effect, researcher degrees of freedom, fishing, forking paths, etc.)\ reported effects tend to be inflated \cite{ioannidis2005contradicted, rothstein2006publication, ioannidis2008most, button2013power, gelman2013garden, open2015estimating}.  Here, we consider two special cases where we expect to find a reasonably honest sample of effects.

\subsection{Open Science Collaboration study on reproducibility in psychology}
To assess the reproducibility of psychological science, the Open Science Collaboration (OSC) selected 100 studies from three leading journals of the American Psychological Association, and replicated them \cite{open2015estimating}. They chose the journals Psychological Science, Journal of Personality and Social Psychology, and Journal of Experimental Psychology: Learning, Memory, and Cognition. According to the authors of the replication study, the first journal is a premier outlet for all psychology research; the second and third are leading disciplinary-specific journals for social psychology and cognitive psychology, respectively. The studies were selected in a quasi-random way to balance two competing goals:  ``minimizing selection bias by having only a small set of articles available at a time and matching studies with replication teams' interests, resources, and expertise.'' The data are publicly available at {\tt https://osf.io/fgjvw/}. Our analyses performed are available as a supplement to this paper.

The regression parameters and their standard errors are not available in this data set, so we transformed the two-sided $p$-values of the replication studies to absolute $z$-values by  
$$|z| = -\,\Phi^{-1}(p\text{-value}/2).$$
Under requirement (\ref{requirement}) the absolute $z$-values are sufficient to estimate the prior. Excluding the $F$-tests, we have 86 absolute $z$-values.

We estimate the distribution of the SNR as a mixture of two zero-mean normal distributions with standard deviations $\tau_1=0.7$ and $\tau_2=4.0$ and mixture proportions 0.57 and 0.43, respectively. We show the marginal distribution of the SNR in Figure \ref{fig:snr}. We show the shrinkage factor in Figure \ref{fig:shrinkage} as a function of the $z$-value. The shrinkage factor is more than 2 for small values of $z$. At $z=1.96$ the shrinkage factor is 1.7. In Figure \ref{fig:sign}, we show the posterior probability that the sign of $\beta$ differs from the sign of $b$. At $z=1.96$ this probability is about 9\%.

\begin{figure}[htp] \centering{
\includegraphics[scale=0.8]{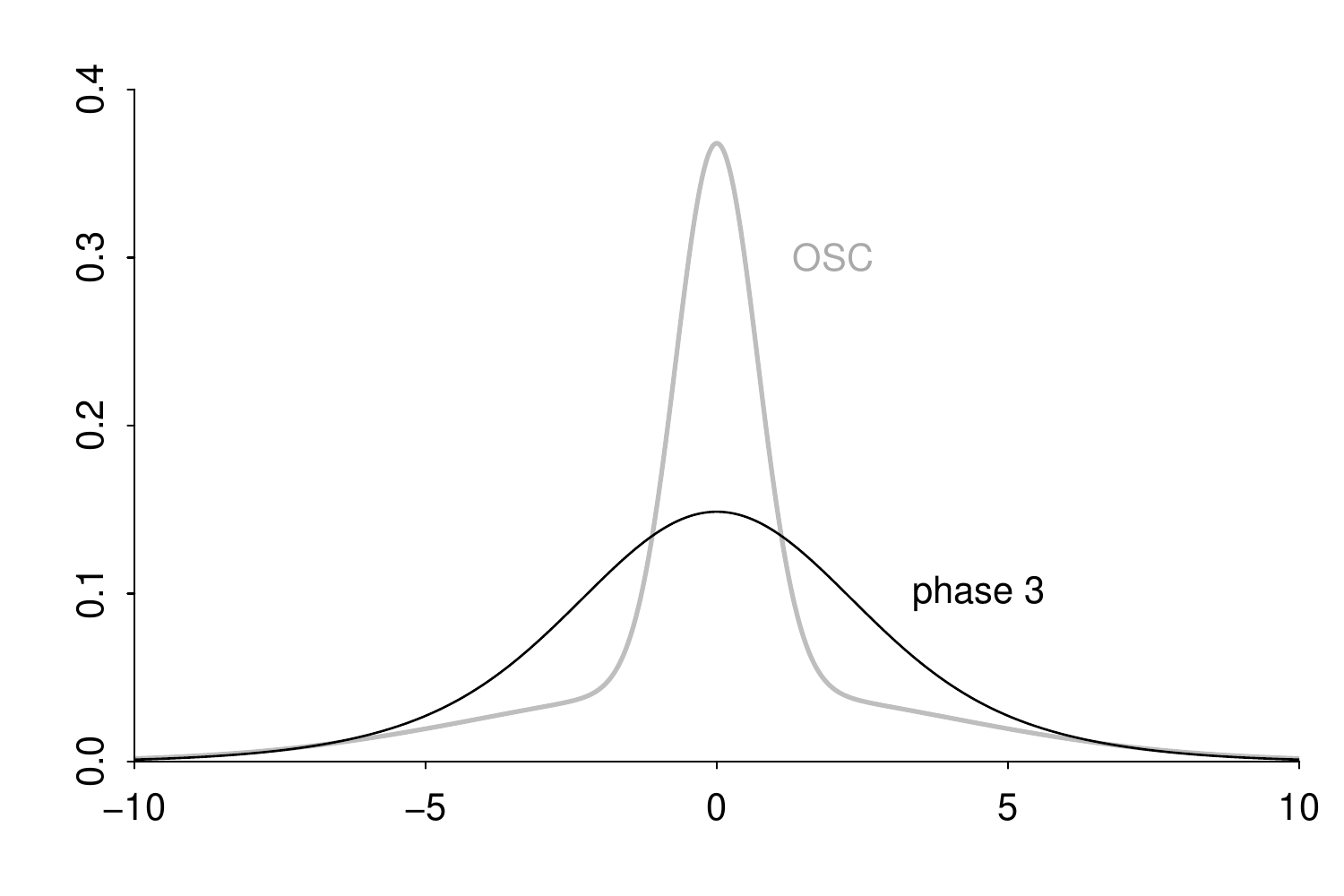}}
\caption{\em Estimated prior distribution for the signal-to-noise ratio, $b/s$, in each of the two classes of problems considered in this article:  psychology studies from the Open Science Collaboration, and Phase 3 clinical trials in medicine. For both cases we fit mixtures of two zero-centered normals..}\label{fig:snr}
\end{figure}

\begin{figure}[htp] \centering{
\includegraphics[scale=0.8]{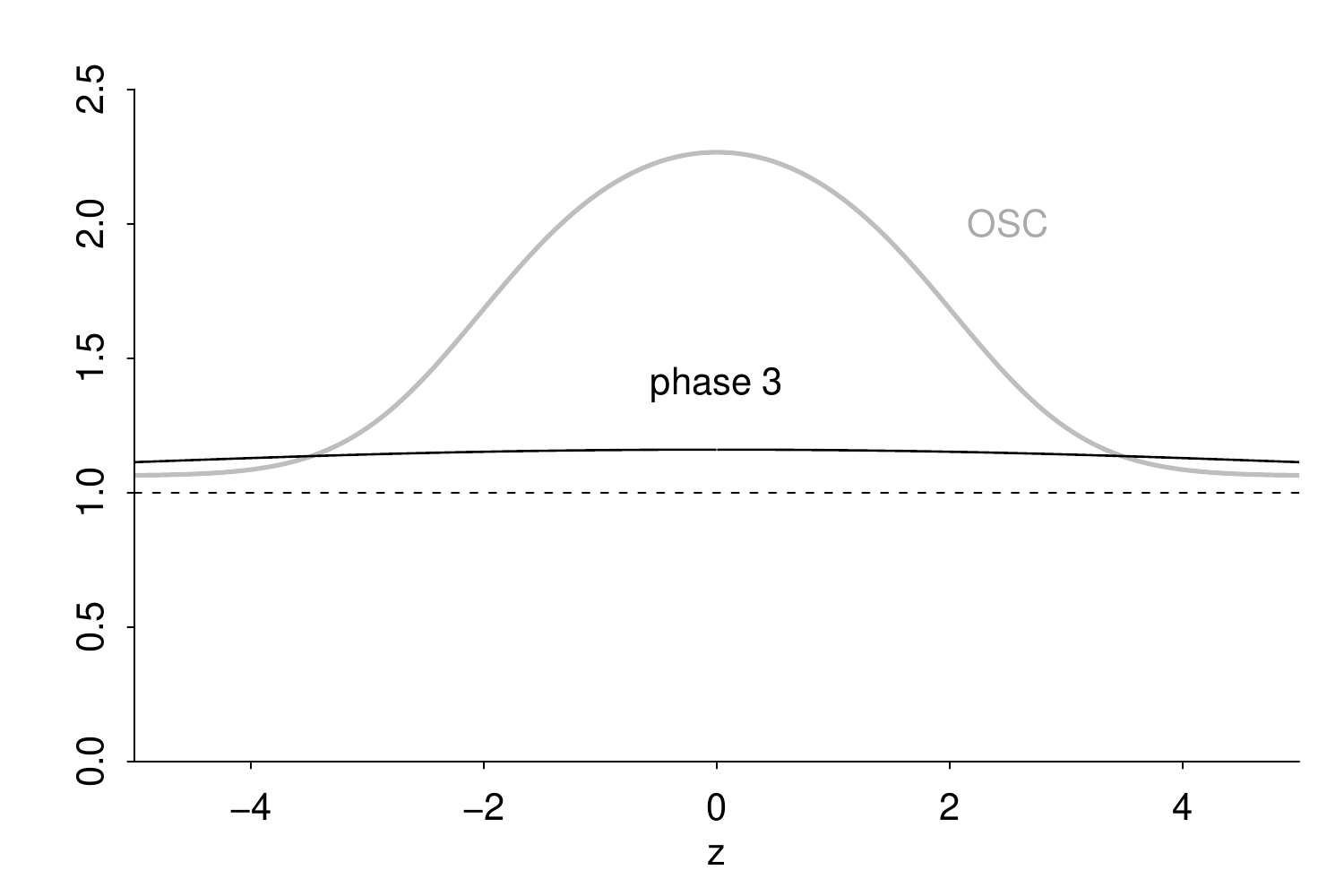}}
\caption{\em Shrinkage factor---the amount by which the raw estimate is divided to yield the Bayes estimate---as a function of the $z$-score of a new study, for each of our two classes of problems.  A shrinkage factor of 1 corresponds to no shrinkage.}\label{fig:shrinkage}
\end{figure}

\begin{figure}[htp] \centering{
\includegraphics[scale=0.8]{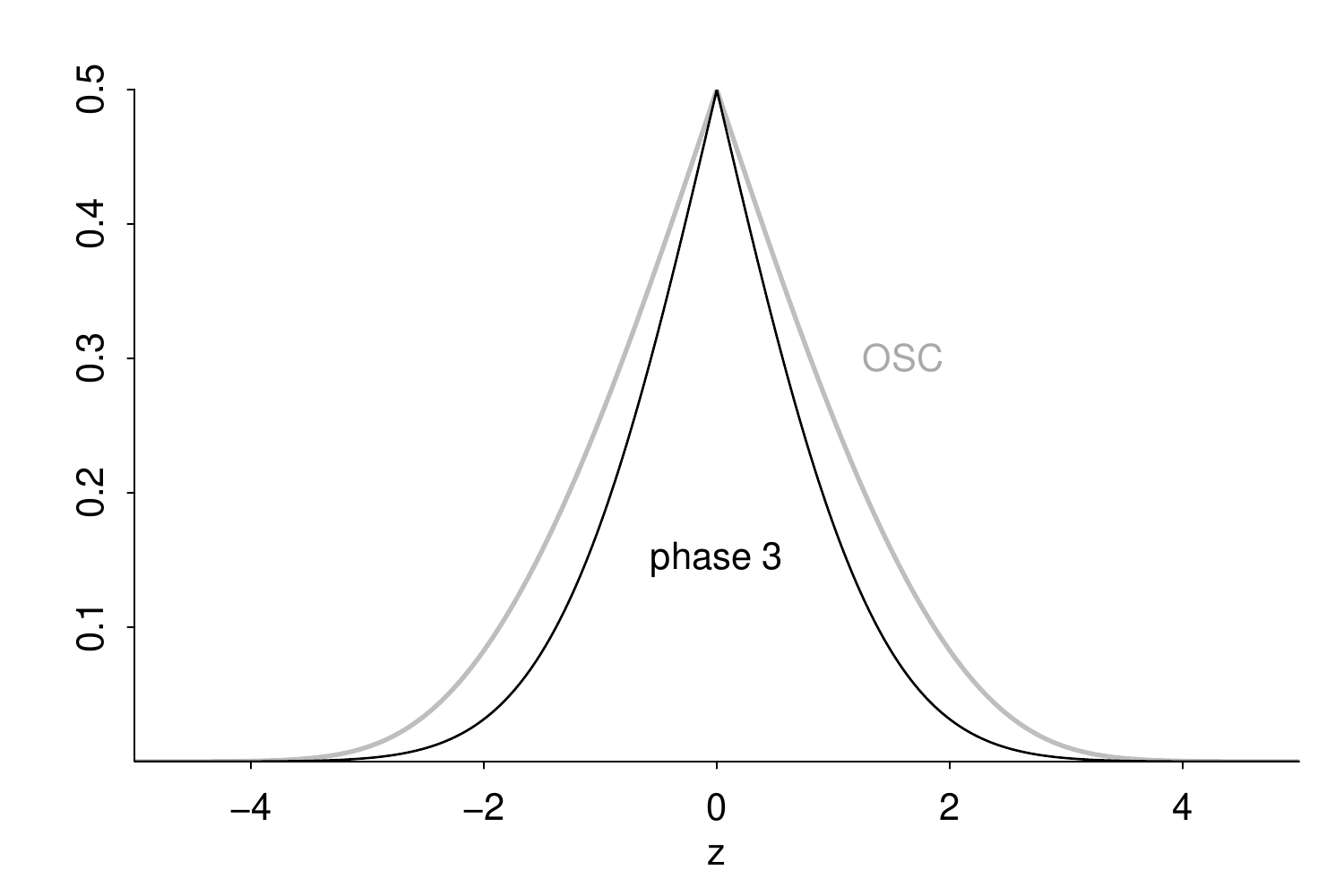}}
\caption{\em Posterior probability of a sign error (that is, the probability that the true effect $\beta$ has the opposite sign of the observed estimate $b$) as a function of the $z$-score, for each of our two classes of problems considered in this article.}\label{fig:sign}
\end{figure}

\subsection{Cochrane phase 3 placebo controlled clinical trials}

The Cochrane Database of Systematic Reviews (CDSR) is the leading journal and database for systematic reviews in health care. We downloaded all primary studies up to 2018 and selected those where either the title or the Cochrane's methods description contained the terms ``phase 3'' or ``phase III.'' Next, we selected all comparisons for efficacy against a placebo. For each study, we selected only a single $z$-value where we tried to obtain the effect of primary interest. We were left with 178 $z$-values.

We estimate the distribution of the SNR as a mixture of two zero-mean normal distributions with standard deviations $\tau_1=2.1$ and $\tau_2=3.6$ and mixture proportions 0.48 and 0.52, respectively. We show the marginal distribution of the SNR  in Figure  \ref{fig:snr}. We show the shrinkage factor in Figure \ref{fig:shrinkage} as a function of the $z$-value. The shrinkage factor is about 1.2 for small values of $z$. At $z=1.96$ the shrinkage factor is 1.15. In Figure \ref{fig:sign}, we show the posterior probability that the sign of $\beta$ differs from the sign of $b$. At $z=1.96$ this probability is about 3\%.

\subsection{Remarks}

The two analyses we have performed yielded very different results. Unsurprisingly, the signal-to-noise ratio in psychology research tends to be much smaller than in phase 3 clinical studies. The latter usually involve very considerable investments and hence may be expected to have high statistical power. Consequently, it seems that one should apply much stronger shrinkage to results from psychological research than from phase 3 clinical studies, especially when the observed absolute $z$-value is small.

An important conclusion of the OSC reproducibility study was that on average the effect size of the replication effects was half the magnitude of the effect size of the original effects \cite{open2015estimating}. This roughly agrees with our analysis. Using {\em only} the results of the replication studies, we found that shrinkage by a factor of about 1.5 to 2 is typically in order.

A more extensive analysis of the entire Cochrane database is reported elsewhere \cite{vanZwetSenn}.

\section{Discussion}

\subsection{The value of default informative priors}

Nearly forty years ago, Donald Rubin wrote \cite{rubin1984bayesianly}:

\begin{quotation}\noindent
Another reason for the applied statistician to care about  Bayesian inference is that consumers of statistical answers, at least interval estimates, commonly interpret them as probability statements about the possible values of parameters. Consequently, the answers statisticians provide to consumers should be capable of being interpreted as approximate Bayesian statements.
\end{quotation}
The present paper is an attempt to do just that. The confidence interval  (\ref{conf}) describes the long-run coverage performance of the random interval $[b - 1.96\, s ,b + 1.96 \, s]$. The statement does not hold conditionally on the data, but it is often mistakenly interpreted ``Bayesianly'' as 
\begin{equation}\label{prob}
\mbox{Pr} (\beta\in [b\pm  1.96\, s] \mid b, s)=0.95.
\end{equation}
where $\beta$ is viewed as a random variable, and we condition on the data pair $(b,s)$. We refer to Greenland et al.\ \cite{greenland2016statistical} for a discussion of this misinterpretation. Statement (\ref{prob}) is arguably more relevant than (\ref{conf}) because it refers to the data at hand, rather than the procedure being used. This may explain, at least in part, the pervasiveness of the misinterpretation; it is what researchers want to know.

The Bayesian statement (\ref{prob}) is only valid if $\beta$ has the (improper) uniform or ``flat''  prior distribution. The matching property of (\ref{conf}) and (\ref{prob}) has lead many to consider the uniform prior to be an objective or noninformative prior \cite{ghosh2011objective}. Many other criteria have been proposed by which a priori might be considered to be objective \cite{kass1996selection}, but in the normal location model with known standard deviation they all yield the (improper) uniform distribution as the unique objective prior.  So, we find that the flat prior is used  for Bayesian inference about regression coefficients in two distinct situations: explicitly with the goal of objective Bayesian inference and implicitly whenever the confidence interval for a regression coefficient is interpreted as a credibility interval. 

The goal of using a noninformative prior is to be impartial or objective by minimizing the influence of the prior on the posterior, see \cite{berger2006case} but also \cite{gelman2017beyond}. However, this influence depends on which aspect of the posterior we are considering.  The flat prior is actually very informative both for the magnitude and the sign of $\beta$. This is just a consequence of the fact that a diffuse prior favors large absolute values. In fact, use of the flat prior results in overestimation of the magnitude of $\beta$ and exaggerated evidence about its sign \cite{van2018default}: type M (magnitude) and type S (sign) errors \cite{gelman2014beyond}. We thus echo the classical Bayesian literature in concluding that ``noninformative prior information'' is a contradiction in terms. The flat prior carries information just like any other; it represents the assumption that the effect is likely to be  large. This is often not true. Indeed, the signal-to-noise ratio $\beta/s$ is often very low and then it is necessary to shrink the unbiased estimate. Failure to do so by inappropriately using the flat prior causes overestimation of effects and subsequent failure to replicate them.  

Some degree of shrinkage is achieved by using weakly informative priors. This can provide good results in many situations \cite{gelman2008weakly, greenland2015penalization}, but can still lead to undershrinkage and positively biased effect size estimates.  Here we propose to use prior information estimated from a large corpus of similar studies, under the constraint that we are modeling effect size in standard error units. By using a wide scope, we can ensure that little information is required that is specific to the study at hand. A wide scope also means that we can include many studies so that  the prior can be estimated accurately. Finally, a wide scope means that the prior information is applicable to many studies. A wide scope does {\em not} imply that the prior will be wide in the sense of having high variance.

If we succeed in accurately estimating the prior information from a large corpus then the resulting posterior inferences will be approximately calibrated with respect to that corpus. That is, posterior probabilities will represent frequencies across the corpus.
It is important to distinguish this frequentist Bayesian between-studies perspective from a more typically Bayes\-ian within-study framework view where posterior probabilities represent a study-specific model.

If our corpus-based prior distributions are to be used for default or routine Bayesian inference, then those inferences should not depend on trivial data transformations such as a change of the unit of measurement. Requiring our inference to be equivariant under linear transformations of the data greatly simplifies estimation of the prior (Theorem 2). Under this requirement, we only need to estimate the symmetric, marginal distribution of the observed  $z$-values. 

To use a corpus-based prior, one only needs to combine it with the (approximately) unbiased, normally distributed estimate of the parameter of interest and its standard error. This is a great advantage, because it allows anyone to perform a quick Bayesian re-analysis of a standard frequentist result. No need to wait for the author to do that!

\subsection{Limitations of our recommended approach}

The main difficulty of the method described in the present paper is the need to compile an honest corpus that is not affected by publication bias, file drawer effect, researcher degrees of freedom, fishing, forking paths, etc.  Promising sources are replication studies, registered reports or careful meta-analyses that make an effort to include also unpublished studies. 

A second caveat is our pragmatic requirement (\ref{requirement}) that our inference should be equivariant under linear transformations of the data. This requirement is important to ensure that it is not possible to manipulate the conclusions of a study by a change of measurement unit or by comparing group B to A instead of A to B. This requirement implies that $b/s$ and $s$ are independent and that the distribution of $b/s$ is symmetric. Those properties may or may not be reasonable in a particular corpus.

Our recommended approach makes use of the Bayesian formalism but is not fully Bayesian in that it does not correspond to any joint distribution of parameters and data.  Our choice of prior is improper, not in the sense of having no finite integral but in that it depends on the data (through the sample size $n$), which is not allowed in Bayesian inference.  For any particular dataset, the prior is proper, but the resulting inference violates Bayes' rule as new data come in.  For example, suppose an experiment with $n=100$ is analyzed using the methods described in the present article, and then the researcher goes and performs the study on 300 more people drawn from the same population.  We can treat this as new data and do Bayesian updating using the posterior from the experiment just performed as our prior for the analysis of the 300 new people, or we can consider the data as one experiment and go back to the default prior, this time scaled to $n=400$.  Because of the scaling of the prior, these two inferences will differ.

Arguably, however, some incoherence is appropriate for any default prior for a continuous parameter.  An informative prior for any regression coefficient will require some scaling \cite{gelman2008weakly}, and if this is not based on the data it would require an equivalent restriction to some class of appropriately-scaled problems.  The prior we have proposed here is unusual in that it is scaled to sample size, but this can be seen as a sort of rationalized version of current statistical practice which is to judge the plausibility of claims based on their $t$ ratio, the number of standard errors the estimate is from zero.

The availability of a corpus-based prior does not preclude using more specific prior information where available.  This can be considered as an implicit restriction of the corpus to a more relevant set of problems.

\subsection{A default default prior?}
In this article we have proposed a method for constructing a default prior for a class of problems by fitting a wide-tailed distribution ($t$ or mixture of normals) to data from a relevant corpus of careful studies.  But what about a truly default prior, to be applied in new problems, or settings where no reliable corpus is available, or for use in general-purpose software?  In this case we could see the virtue of a choice such as $t_1(0,1)$, which does a lot of shrinkage for noisy estimates but approaches the classical limit as the precision of the estimate increases, as illustrated in Figure \ref{fig:cauchy}.  There is no magic about this estimate, and it will be appropriate only to the extent that this prior reflects the distribution of underlying effects.  That said, we believe that this sort of standard-error-scaled prior can be a useful starting point in many settings.

\begin{figure}[htp] \centering{
\includegraphics[width=.5\textwidth]{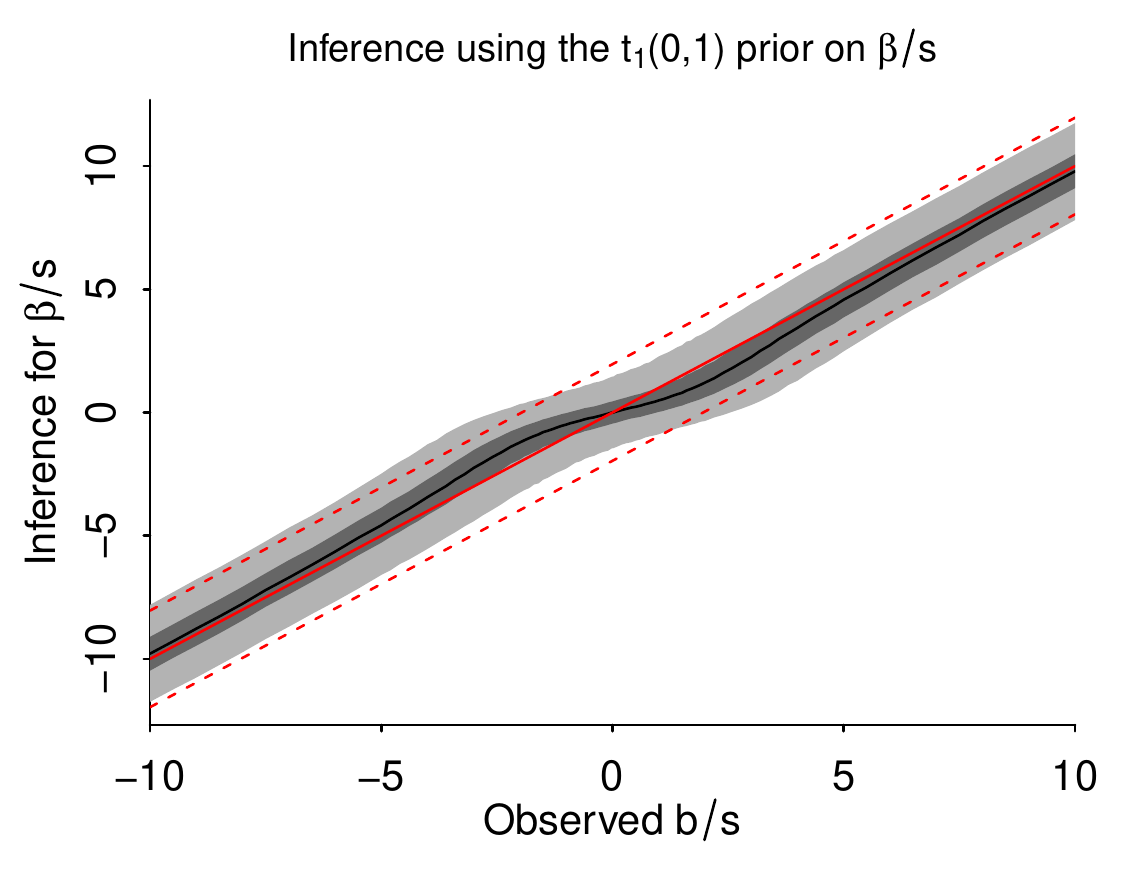}\includegraphics[width=.5\textwidth]{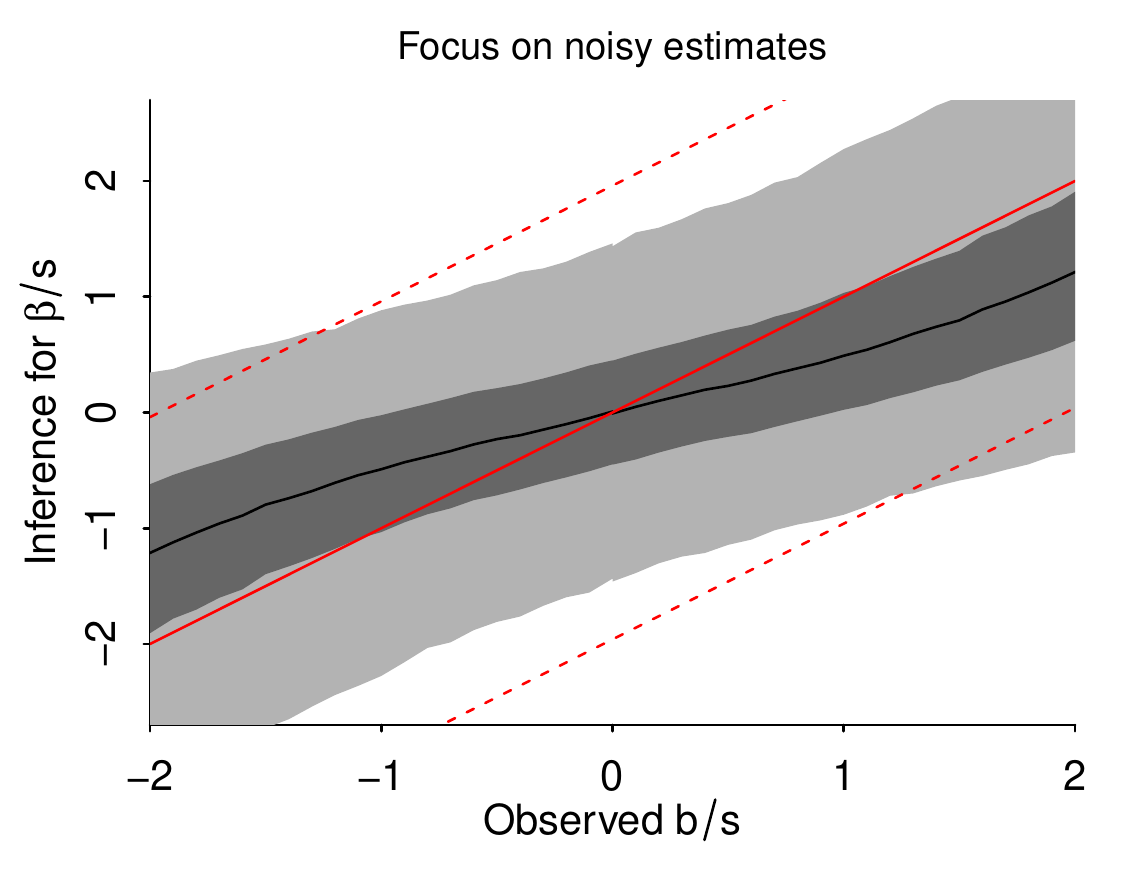}}
\caption{\em Inference given a default $t_1(0,1)$ prior on $b/s$.  Line and shaded bands show posterior median, 50\%, and 95\% intervals.  Red line and dotted lines show classical (or flat-prior) estimate and 95\% interval.  When the estimate is noisy (less than 2 standard errors away from zero, estimates are shrunk about halfway toward zero; as the precision of the estimate becomes higher, the inferences approach the classical limit.}\label{fig:cauchy}
\end{figure}

\bibliographystyle{plain}
%\bibliography{literature}

\begin{thebibliography}{10}

\bibitem{berger2006case}
James~O. Berger.
\newblock The case for objective {Bayesian} analysis.
\newblock {\em Bayesian Analysis}, 1(3):385--402, 2006.

\bibitem{button2013power}
Katherine~S. Button, John P.~A. Ioannidis, Claire Mokrysz, Brian~A. Nosek,
  Jonathan Flint, Emma S.~J. Robinson, and Marcus~R. Munaf{\`o}.
\newblock Power failure: Why small sample size undermines the reliability of
  neuroscience.
\newblock {\em Nature Reviews Neuroscience}, 14(5):365, 2013.

\bibitem{carlin2010bayes}
Bradley~P. Carlin and Thomas~A. Louis.
\newblock {\em Bayes and Empirical Bayes Methods for Data Analysis}.
\newblock CRC Press, 2010.

\bibitem{open2015estimating}
Open~Science Collaboration.
\newblock Estimating the reproducibility of psychological science.
\newblock {\em Science}, 349(6251):aac4716, 2015.

\bibitem{dawid1973posterior}
A.~Philip Dawid.
\newblock Posterior expectations for large observations.
\newblock {\em Biometrika}, 60(3):664--667, 1973.

\bibitem{efron2012large}
Bradley Efron.
\newblock {\em Large-Scale Inference: Empirical Bayes Methods for Estimation,
  Testing, and Prediction}.
\newblock Cambridge University Press, 2012.

\bibitem{gelman2018anthropic}
Andrew Gelman.
\newblock The anthropic principle in statistics.
\newblock {\em Statistical Modeling, Causal Inference, and Social Science},
  2018.
\newblock
  \url{https://statmodeling.stat.columbia.edu/2018/05/23/anthropic-principle-statistics/}.

\bibitem{gelman2014beyond}
Andrew Gelman and John Carlin.
\newblock Beyond power calculations: Assessing type {S} (sign) and type {M}
  (magnitude) errors.
\newblock {\em Perspectives on Psychological Science}, 9(6):641--651, 2014.

\bibitem{gelman2017beyond}
Andrew Gelman and Christian Hennig.
\newblock Beyond subjective and objective in statistics (with discussion).
\newblock {\em Journal of the Royal Statistical Society, Series A}, 2017.

\bibitem{gelman2008weakly}
Andrew Gelman, Aleks Jakulin, Maria~Grazia Pittau, and Yu-Sung Su.
\newblock A weakly informative default prior distribution for logistic and
  other regression models.
\newblock {\em Annals of Applied Statistics}, 2(4):1360--1383, 2008.

\bibitem{gelman2013garden}
Andrew Gelman and Eric Loken.
\newblock The statistical crisis in science.
\newblock {\em American Scientist}, 102:460--465, 2014.

\bibitem{gelman2000type}
Andrew Gelman and Francis Tuerlinckx.
\newblock Type {S} error rates for classical and {Bayesian} single and multiple
  comparison procedures.
\newblock {\em Computational Statistics}, 15(3):373--390, 2000.

\bibitem{ghosh2011objective}
Malay Ghosh et~al.
\newblock Objective priors: An introduction for frequentists.
\newblock {\em Statistical Science}, 26(2):187--202, 2011.

\bibitem{greenland2015penalization}
Sander Greenland and Mohammad~Ali Mansournia.
\newblock Penalization, bias reduction, and default priors in logistic and
  related categorical and survival regressions.
\newblock {\em Statistics in Medicine}, 34(23):3133--3143, 2015.

\bibitem{greenland2016statistical}
Sander Greenland, Stephen~J. Senn, Kenneth~J. Rothman, John~B. Carlin, Charles
  Poole, Steven~N. Goodman, and Douglas~G. Altman.
\newblock Statistical tests, {P} values, confidence intervals, and power: A
  guide to misinterpretations.
\newblock {\em European Journal of Epidemiology}, 31(4):337--350, 2016.

\bibitem{ioannidis2005contradicted}
John P.~A. Ioannidis.
\newblock Contradicted and initially stronger effects in highly cited clinical
  research.
\newblock {\em Journal of the American Medical Association}, 294(2):218--228,
  2005.

\bibitem{ioannidis2008most}
John P.~A. Ioannidis.
\newblock Why most discovered true associations are inflated.
\newblock {\em Epidemiology}, 19(5):640--648, 2008.

\bibitem{kass1996selection}
Robert~E. Kass and Larry Wasserman.
\newblock The selection of prior distributions by formal rules.
\newblock {\em Journal of the American Statistical Association},
  91(435):1343--1370, 1996.

\bibitem{o2012bayesian}
Anthony O'Hagan and Luis Pericchi.
\newblock Bayesian heavy-tailed models and conflict resolution: A review.
\newblock {\em Brazilian Journal of Probability and Statistics},
  26(4):372--401, 2012.

\bibitem{rothstein2006publication}
Hannah~R. Rothstein, Alexander~J. Sutton, and Michael Borenstein.
\newblock {\em Publication Bias in Meta-Analysis: Prevention, Assessment and
  Adjustments}.
\newblock Wiley, 2006.

\bibitem{rubin1984bayesianly}
Donald~B. Rubin.
\newblock Bayesianly justifiable and relevant frequency calculations for the
  applied statistician.
\newblock {\em Annals of Statistics}, 12(4):1151--1172, 1984.

\bibitem{stefanski2002calculus}
Leonard~A. Stefanski and Dennis~D. Boos.
\newblock The calculus of {M}-estimation.
\newblock {\em American Statistician}, 56(1):29--38, 2002.

\bibitem{vanZwetCator}
E.~W. van Zwet and E.~A. Cator.
\newblock The winner's curse and the need to shrink.
\newblock 2021.
\newblock \url{http://arxiv.org/abs/2009.09440/}.

\bibitem{vanZwetSenn}
E.~W. van Zwet, S.~Schwab, and S.~J. Senn.
\newblock The statistical properties of {RCTs}.
\newblock 2021.

\bibitem{van2018default}
Erik van Zwet.
\newblock A default prior for regression coefficients.
\newblock {\em Statistical Methods in Medical Research}, 28:3799--3807, 2019.

\end{thebibliography}

\appendix

\section{Proof of Theorem 2}

\begin{lemma}
Suppose $X$ and $Y$ are random variables and $Y$ is positive. Then
\begin{equation}\label{premise}
f_{X|Y=y}(x) = c f_{X|Y=cy}(cx) 
\end{equation}
for every $c>0$, if and only if $Z=X/Y$ and $Y$ are independent.
\end{lemma}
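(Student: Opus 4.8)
The plan is to reduce everything to the conditional distribution of $Z = X/Y$ given $Y$, since independence of $Z$ and $Y$ is exactly the statement that this conditional distribution does not depend on $Y$. The bridge between the functional equation and this reformulation is the change of variables $X = ZY$. Conditionally on $Y = y$ with $y>0$, the map $z \mapsto zy$ carries $Z$ to $X$, so the conditional densities are related by
\[
f_{X|Y=y}(x) = \frac{1}{y}\, g_{Z|Y=y}(x/y),
\]
where $g_{Z|Y=y}$ denotes the conditional density of $Z$ given $Y=y$. I would establish this first; it is a one-line Jacobian computation, using $y>0$ so that $|1/y| = 1/y$.

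Next I would substitute this identity into both sides of (\ref{premise}). The left side is $\tfrac{1}{y}\,g_{Z|Y=y}(x/y)$. For the right side, evaluating the density at $Y = cy$ gives
\[
c\, f_{X|Y=cy}(cx) = c\cdot \frac{1}{cy}\, g_{Z|Y=cy}\!\left(\frac{cx}{cy}\right) = \frac{1}{y}\, g_{Z|Y=cy}(x/y),
\]
so the factors of $c$ and $y$ cancel. Setting $z = x/y$, which ranges over all of $\mathbb{R}$ as $x$ does, equation (\ref{premise}) is therefore equivalent to
\[
g_{Z|Y=y}(z) = g_{Z|Y=cy}(z) \quad \text{for all } c>0.
\]

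Both directions then follow from this reformulation. If $Z$ and $Y$ are independent, then $g_{Z|Y=y}$ does not depend on $y$, so the displayed equality is trivial and (\ref{premise}) holds. Conversely, assuming (\ref{premise}), I would fix a reference value $y_0$ in the support of $Y$ and, for an arbitrary $y$ in the support, take $c = y/y_0 > 0$ to obtain $g_{Z|Y=y}(z) = g_{Z|Y=y_0}(z)$. Hence the conditional density of $Z$ given $Y$ is the same function of $z$ for every $y$, which is precisely independence of $Z$ and $Y$.

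The computation itself is routine; the point that needs care is the final step of the converse, where one must use that $\{cy : c>0\}$ sweeps out the whole positive axis in order to pin the conditional density down to a single function free of $y$. I would also note that all equalities here hold for almost every $(x,y)$ with respect to the relevant measures, so the conclusion is properly an independence statement at the level of conditional distributions rather than a pointwise identity of densities.
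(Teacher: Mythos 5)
Your proof is correct and follows essentially the same route as the paper's: both rest on the Jacobian identity relating $f_{X|Y=y}$ to the conditional density of $Z=X/Y$ given $Y=y$, and both reduce the functional equation to the statement that this conditional density is the same for all values of $y$ (the paper normalizes to the reference point $y=1$ via $c=1/y$, while you use an arbitrary $y_0$ with $c=y/y_0$, a purely cosmetic difference). Your closing remark about the almost-everywhere nature of the identities is a point the paper glosses over, but it does not change the substance of the argument.
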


\begin{proof}
Note that
\begin{equation}\label{l:1}
f_{Z \mid Y=y}(x/y) = y f_{X \mid Y=y}(x).
\end{equation}
Now, suppose (\ref{premise}) holds. If we choose $c=1/y$ then
\begin{equation}\label{l:2}
f_{X|Y=y}(x) = \frac{1}{y} f_{X|Y=1}(x/y) = \frac{1}{y} f_{Z |Y=1}(x/y).
\end{equation}
Combining (\ref{l:2}) with (\ref{l:1}), we find
\begin{equation}\label{key}
f_{Z \mid Y=y}(x/y) =  f_{Z |Y=1}(x/y).
\end{equation}
Since this equality holds for every $x$ and $y>0$, we see that $Z$ and $Y$ are independent.

\medskip \noindent
To prove the converse,  independence of $Z$ and $Y$ implies
\begin{equation}\label{l:3}
f_{Z|Y=y}(x/y)=f_{Z|Y=cy}(x/y)=cyf_{X|Y=cy}(cx).
\end{equation}
Combining (\ref{l:3}) with (\ref{l:1}), we find  (\ref{premise}).
\end{proof}

\medskip \noindent
We recall Theorem 2 and prove it.

\setcounter{theorem}{1}
\begin{theorem}
Requirement (\ref{requirement}) holds if and only if
\begin{enumerate}
\item $s$ and $\beta/s$ are independent.
\item The distribution of $\beta/s$ is symmetric around zero.
\end{enumerate}
\end{theorem}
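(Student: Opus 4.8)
The plan is to exploit the fact that the normal likelihood is itself equivariant under the transformation $(\beta,b,s)\mapsto(c\beta,cb,|c|s)$, so that requirement (\ref{requirement}) on the posterior collapses to a requirement of the same shape on the prior kernel $p(\beta\mid s)$, which is exactly the premise (\ref{premise}) of the lemma. First I would record the scaling identity for the likelihood: a direct computation with the normal density shows that $|c|\,p(cb\mid c\beta,|c|s)=p(b\mid\beta,s)$ for every $c\neq 0$, because the squared shift $(cb-c\beta)^2$ and the variance $(|c|s)^2$ both acquire a factor $c^2$ that cancels in the exponent, while the normalizing constant contributes the compensating factor $1/|c|$.

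Next I would substitute Bayes' rule into both sides of (\ref{requirement}). Writing $p(\beta\mid b,s)\propto p(b\mid\beta,s)\,p(\beta\mid s)$ with normalizer $p(b\mid s)$, expanding $p(c\beta\mid cb,|c|s)$ the same way, and inserting the likelihood scaling identity, the factor $p(b\mid\beta,s)$ appears on both sides and cancels (it is strictly positive), leaving
$$\frac{p(\beta\mid s)}{p(b\mid s)}=\frac{p(c\beta\mid |c|s)}{p(cb\mid |c|s)}.$$
Rearranged, this says that $p(\beta\mid s)/p(c\beta\mid|c|s)$ is a function of $(s,c)$ alone, since its left-hand form does not depend on $b$ and its right-hand form does not depend on $\beta$. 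Integrating in $\beta$ and using that both $p(\cdot\mid s)$ and $p(\cdot\mid|c|s)$ are densities, the change of variable $u=c\beta$ gives $\int p(c\beta\mid|c|s)\,d\beta=1/|c|$ and hence pins this function down to $|c|$. Thus requirement (\ref{requirement}) is equivalent to
$$p(\beta\mid s)=|c|\,p(c\beta\mid |c|s)\quad\mbox{for all } c\neq 0.$$

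Finally I would split the multiplicative group $\mathbb{R}\setminus\{0\}$ into its positive part and the reflection $c=-1$. For $c>0$ the displayed identity is precisely the premise (\ref{premise}) of the lemma with $X=\beta$ and $Y=s$, so the lemma yields that $\beta/s$ and $s$ are independent, which is condition 1. For $c=-1$ the identity reads $p(\beta\mid s)=p(-\beta\mid s)$, i.e.\ $\beta$ given $s$ is symmetric about zero, equivalently $\beta/s$ given $s$ is symmetric; combined with the independence just obtained this is symmetry of the \emph{marginal} distribution of $\beta/s$, which is condition 2. Since every $c\neq 0$ factors as $(-1)\cdot|c|$, these two cases regenerate the whole family, and because each step in the reduction is reversible the argument delivers both directions of the equivalence at once.

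The main obstacle I anticipate is bookkeeping rather than conceptual. I must check that the normalization step really produces $|c|$ under the change of variables when $c<0$, where the orientation reversal supplies a sign that exactly compensates the sign of $c$, and I must keep careful track of the positivity of the densities needed both to cancel the likelihood and to treat the ratio as a function of $(s,c)$ alone. The genuinely substantive content---that scale equivariance forces independence---is already isolated in the lemma, so once the posterior requirement is reduced to the prior identity, the rest is an application of it together with the simple reflection case.
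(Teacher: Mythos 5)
Your proposal is correct and follows essentially the same route as the paper's proof: use the scaling identity of the normal likelihood to reduce the posterior requirement to the kernel identity $p(\beta\mid s)=|c|\,p(c\beta\mid|c|s)$, then apply the lemma for $c>0$ to get independence and take $c=-1$ for symmetry. The only difference is that you spell out the normalization/integration argument showing the ratio must equal $|c|$, a step the paper asserts without detail.
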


\begin{proof}
Conditionally on $\beta$ and $s$, $b$ has the normal distribution with mean $\beta$ and standard deviation $s$,
$$f(b \mid \beta,s) = \frac{1}{\sqrt{2\pi\, s^2}} e^{-\frac{(b - \beta)^2}{2s^2}},$$
It is easy to verify that
\begin{equation}
f(b \mid \beta, s) =|c| f(c b \mid c\beta, |c|s).
\end{equation}
for every $c \neq 0$. So, requirement (\ref{requirement}) is equivalent with
\begin{equation} \label{p:1}
f(\beta \mid s)=|c| f(c\beta \mid |c|s),
\end{equation}
for every $c \neq 0$. Now, by lemma 1 we have that (\ref{p:1}) holds for all $c>0$ if and only if $s$ and $\beta/s$ are independent. Taking $c=-1$  in (\ref{p:1}), we see that the conditional distribution of $\beta$ given $s$ is symmetric around zero. It follows that the marginal distribution of $\beta/s$ is also symmetric. 
\end{proof}

\section{Details of section 4.1}
We model the distribution of the SNR as a mixture of two zero-mean normal distributions. We the mixture model in terms of a mixture indicator $d$ that can take on the values 1 or 2:
\begin{itemize}
\item $\mbox{Pr}(d=1)=1-\mbox{Pr}(d=2)=p$.
\item Given $d=i$, $\beta/s$ has the normal distribution with mean 0 and standard deviation $\tau_i$ ($i=1,2$).
\item Given $d$ and $\beta/s$, $z$ has the normal distribution with mean $\beta/s$ and standard deviation 1.
\end{itemize}
The marginal distribution $g$ of $z$ is a mixture of two zero-mean normals with standard deviations $\sqrt{\tau_1^2+1}$ and $\sqrt{\tau_2^2+1}$, and mixing proportion $p$,
\begin{equation}
g(z) = \frac{p}{\sqrt{\tau_1^2+1}} \, \varphi \left( \frac{z}{\sqrt{\tau_1^2+1}} \right)  +  \frac{1-p}{\sqrt{\tau_2^2+1}} \, \varphi \left( \frac{z}{\sqrt{\tau_2^2+1}} \right), 
\end{equation}
where $\varphi$ is the standard normal density function. We can easily fit this model from a sample $z_1,z_2,\dots,z_n$, using maximum likelihood or Bayesian inference. Upon fitting this model, we can obtain the conditional distribution of $\beta$ given $b$ and $s$. This posterior distribution is again a mixture of two normals. The mixing proportion is
\begin{align}
\mbox{Pr}(d=1 \mid b,s) = \frac{p}{\sqrt{\tau_1^2+1}} \varphi \left( \frac{b}{s \sqrt{\tau_1^2+1}} \right) \frac{1}{g(b/s)}
\end{align}
and the two normals have  means
\begin{equation}
\mu_i = s\,  \E(\beta/s \mid b, s, d=i) = \frac{b\, \tau_i^2}{\tau_i^2 +1}
\end{equation}
and variances
\begin{equation}
\sigma_i^2 = s^2\, {\rm Var}(\beta/s \mid b, s, d=i) = \frac{s^2\, \tau_i^2}{\tau_i^2 +1}
\end{equation}
for $i=1,2$. We define the ``shrinkage factor'' as $b/ \E(\beta \mid b, s)$, which depends on $b$ and $s$ only through $z=b/s$.

\end{document}